\documentclass[12pt,a4paper]{article}

\usepackage[T1]{fontenc}
\usepackage[utf8]{inputenc}
\usepackage[english]{babel}

\usepackage{amsmath,amssymb,amsthm}
\usepackage{mathtools}

\usepackage{graphicx}
\usepackage{xcolor}
\usepackage{tikz}
\usetikzlibrary{shapes,arrows,positioning,decorations.pathreplacing}
\usepackage{float}

\usepackage{booktabs}

\usepackage{geometry}
\geometry{margin=1in}

\newtheorem{theorem}{Theorem}
\newtheorem{lemma}{Lemma}
\newtheorem{axiom}{Axiom}
\newtheorem{definition}{Definition}
\newtheorem{corollary}{Corollary}
\newtheorem{remark}{Remark}
\newtheorem{proposition}{Proposition}
\newtheorem{example}{Example}

\usepackage{hyperref}


\title{Causal Consistency Selects the Born Rule:\\A Derivation from Steering in Generalized Probabilistic Theories}

\author{Enso O. Torres Alegre\\
\small ORCID: 0000-0002-6798-8776\\
\small Pontifical Catholic University of Chile,\\
\small Santiago, Chile\\
\texttt{onill@uc.cl}}
\date{}

\begin{document}

\maketitle

\begin{abstract}
Within finite-dimensional generalized probabilistic theories (GPTs), we distinguish between the \emph{geometric transition probability} $\tau(\psi,\phi)$---a structural quantity defined as the maximum probability of accepting $\phi$ when the state is $\psi$---and the \emph{predictive probability} $P(\phi|\psi)$ assigned to measurement outcomes. We ask: what functional relationship $P = \Phi(\tau)$ is compatible with relativistic causality?

We prove that in any GPT satisfying purification (hence admitting steering), the only such relationship consistent with no-signaling is the identity $\Phi(p) = p$. The argument proceeds by showing that any strictly convex or concave deviation from linearity enables superluminal signaling through quantum steering scenarios. We provide an explicit qubit example demonstrating how nonlinear probability rules create detectable signaling channels.

Combined with established reconstruction theorems, this yields the standard Born rule $|\langle\phi|\psi\rangle|^2$ as the unique causally consistent probability assignment. Our analysis clarifies the distinct roles of geometric structure and probabilistic prediction in quantum theory, and identifies steering as the physical mechanism that enforces the Born rule.
\end{abstract}

\vspace{20pt}

\section{Introduction}

The Born rule $P(i) = |\langle\phi_i|\psi\rangle|^2$ occupies a peculiar position in quantum mechanics: it is empirically indispensable yet theoretically underived from the other postulates \cite{Born1926, vonNeumann1932}. This raises a foundational question: \emph{is the quadratic form of quantum probabilities a fundamental postulate, or does it follow from deeper principles?}

Several derivation strategies exist, each with characteristic assumptions. Gleason's theorem \cite{Gleason1957} derives the Born rule from noncontextuality but presupposes Hilbert-space structure and requires dimension $d \geq 3$. Decision-theoretic approaches \cite{Deutsch1999, Wallace2012} invoke rationality axioms within the Everett interpretation. Zurek's envariance argument \cite{Zurek2005} operates within standard quantum mechanics. More recently, connections between the Born rule and relativistic causality have been explored \cite{Barnum2000, Valentini1991, Aaronson2004}, suggesting that modifications to quantum probabilities may conflict with no-signaling constraints.

In this paper, we develop this causality-based approach systematically within the framework of Generalized Probabilistic Theories (GPTs) \cite{Barrett2007, Hardy2001, Chiribella2010, Masanes2011, Plavala2023}. GPTs provide a theory-neutral setting encompassing classical probability, quantum mechanics, and hypothetical alternatives, allowing us to ask precisely which features of quantum theory are necessary consequences of operational principles.

\subsection{The Conceptual Distinction}

Our analysis rests on distinguishing two quantities that are often conflated:

\begin{enumerate}
    \item \textbf{Geometric transition probability} $\tau(\psi,\phi)$: A structural quantity characterizing the ``closeness'' of pure states, defined operationally as the supremum over acceptance probabilities in certain tests. In quantum mechanics, $\tau(\psi,\phi) = |\langle\phi|\psi\rangle|^2$.
    
    \item \textbf{Predictive probability} $P(\phi|\psi)$: The probability assigned to obtaining outcome $\phi$ when measuring a system prepared in state $\psi$. This is what experimenters compare against observed frequencies.
\end{enumerate}

The standard Born rule \emph{identifies} these quantities: $P(\phi|\psi) = \tau(\psi,\phi)$. But this identification is not logically necessary. One could imagine theories where predictive probabilities are some nonlinear function of geometric overlaps:
\begin{equation}
P(\phi|\psi) = \Phi\big(\tau(\psi,\phi)\big)
\end{equation}
for some function $\Phi: [0,1] \to [0,1]$.

Our main result shows that requiring compatibility with no-signaling in theories with steering forces $\Phi$ to be the identity. The key physical mechanism is that steering allows distant parties to prepare ensembles with identical average states but different decompositions, and any nonlinear $\Phi$ converts these preparation differences into observable statistical differences, enabling signaling.

\subsection{Summary of Results}

\begin{enumerate}
    \item We formalize the distinction between geometric and predictive probabilities in GPTs (Section~\ref{sec:framework}).
    
    \item We justify physically motivated constraints on $\Phi$ and rigorously derive the extension to mixed states (Section~\ref{sec:constraints}).
    
    \item We prove that no-signaling, combined with the existence of steering, forces $\Phi(p) = p$ (Section~\ref{sec:main}).
    
    \item We provide an explicit qubit example with numerical calculations (Section~\ref{sec:example}).
    
    \item We connect to reconstruction theorems to obtain the standard Born rule in complex quantum theory (Section~\ref{sec:reconstruction}).
\end{enumerate}

Figure~\ref{fig:logic} summarizes the logical structure.

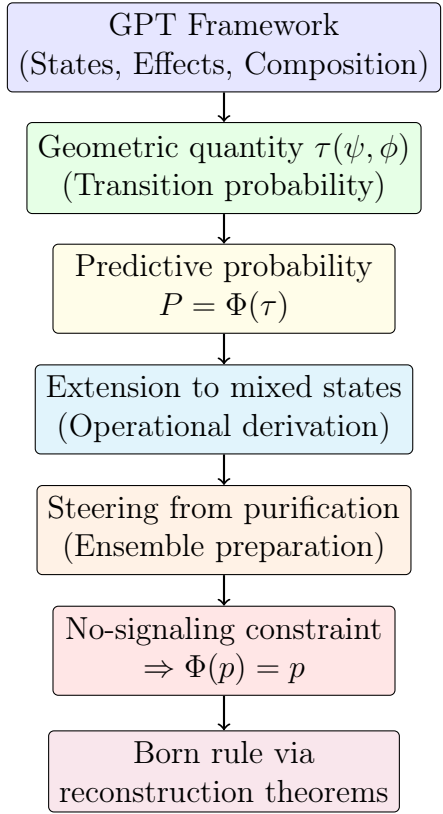
\begin{figure}[H]
\centering
\begin{tikzpicture}[node distance=1.6cm, auto,
    block/.style={draw, rectangle, minimum width=4.5cm, minimum height=0.9cm, align=center, rounded corners=2pt}]
    
    \node (gpt) [block, fill=blue!10] {GPT Framework\\(States, Effects, Composition)};
    \node (geom) [block, below of=gpt, fill=green!10] {Geometric quantity $\tau(\psi,\phi)$\\(Transition probability)};
    \node (pred) [block, below of=geom, fill=yellow!10] {Predictive probability\\$P = \Phi(\tau)$};
    \node (ext) [block, below of=pred, fill=cyan!10] {Extension to mixed states\\(Operational derivation)};
    \node (steer) [block, below of=ext, fill=orange!10] {Steering from purification\\(Ensemble preparation)};
    \node (causal) [block, below of=steer, fill=red!10] {No-signaling constraint\\$\Rightarrow \Phi(p) = p$};
    \node (born) [block, below of=causal, fill=purple!10] {Born rule via\\reconstruction theorems};
    
    \draw[->, thick] (gpt) -- (geom);
    \draw[->, thick] (geom) -- (pred);
    \draw[->, thick] (pred) -- (ext);
    \draw[->, thick] (ext) -- (steer);
    \draw[->, thick] (steer) -- (causal);
    \draw[->, thick] (causal) -- (born);
    
\end{tikzpicture}
\caption{Logical structure of the derivation. The key insight is that steering converts nonlinearity of $\Phi$ into signaling.}
\label{fig:logic}
\end{figure}

\section{Generalized Probabilistic Theories}
\label{sec:framework}

We work within the standard GPT formalism \cite{Hardy2001, Barrett2007, Plavala2023}. A physical system is described by:

\begin{definition}[GPT structure]
A GPT system consists of:
\begin{itemize}
    \item A finite-dimensional real vector space $V$ with a closed, generating, pointed cone $V^+ \subset V$
    \item An order unit $u \in V^*$ (the ``unit effect'')
    \item \emph{States}: normalized positive elements $\Omega = \{\omega \in V^+ : u(\omega) = 1\}$
    \item \emph{Effects}: positive functionals $e: V \to \mathbb{R}$ with $0 \leq e(\omega) \leq 1$ for all $\omega \in \Omega$
    \item \emph{Measurements}: collections $\{e_i\}$ of effects satisfying $\sum_i e_i = u$
\end{itemize}
\end{definition}

Pure states are extreme points of the convex set $\Omega$. Mixed states arise as convex combinations: $\omega = \sum_i \lambda_i \psi_i$ with $\lambda_i \geq 0$, $\sum_i \lambda_i = 1$.

For composite systems, we assume \emph{local tomography}: joint states are determined by correlations in local measurements, mathematically $V_{AB} \cong V_A \otimes V_B$.

\subsection{Operational Axioms}

We adopt the following axioms, standard in GPT reconstruction programs \cite{Chiribella2010, Masanes2011}:

\begin{axiom}[No-signaling]
\label{ax:NS}
For bipartite state $\omega_{AB}$ and local measurements $\{a_i\}$ on $A$, $\{b_j\}$ on $B$:
$$
\sum_j P(a_i, b_j | \omega_{AB}) \text{ is independent of the choice of } \{b_j\}
$$
and vice versa. Local marginal statistics cannot depend on distant measurement choices.
\end{axiom}

\begin{axiom}[Purification]
\label{ax:purification}
Every mixed state $\omega_A$ of system $A$ arises as the marginal of some pure bipartite state $\Psi_{AB}$:
$$
\omega_A = \text{Tr}_B(\Psi_{AB})
$$
Moreover, purifications are essentially unique: any two purifications are related by a reversible transformation on the purifying system.
\end{axiom}

\begin{axiom}[Continuous reversibility]
\label{ax:cont-rev}
The group of reversible transformations acts transitively on pure states, and this action is continuous.
\end{axiom}

\begin{axiom}[Spectrality]
\label{ax:spectral}
Every state admits a decomposition into perfectly distinguishable pure states. For any pair of perfectly distinguishable pure states $\{\phi, \phi^\perp\}$, there exists a two-outcome measurement $\{e_\phi, e_{\phi^\perp}\}$ with $e_\phi(\phi) = 1$, $e_\phi(\phi^\perp) = 0$.
\end{axiom}

\section{Geometric vs. Predictive Probabilities}
\label{sec:distinction}

We now make precise the distinction central to our analysis.

\subsection{Geometric Transition Probability}

\begin{definition}[Transition probability]
\label{def:tau}
For pure states $\psi, \phi \in \Omega_{\text{pure}}$, the \emph{geometric transition probability} is:
$$
\tau(\psi, \phi) := \sup \big\{ e(\psi) : e \text{ is an effect with } e(\phi) = 1 \big\}
$$
\end{definition}

This quantity has a clear operational meaning: $\tau(\psi,\phi)$ is the maximum probability of passing a test designed to accept $\phi$ with certainty, when the actual state is $\psi$. Crucially, this definition involves only the \emph{geometric structure} of the state and effect spaces---it asks what acceptance probabilities are achievable, not what predictive probabilities are assigned.

\begin{lemma}[Properties of $\tau$]
\label{lem:tau}
Under Axioms~\ref{ax:cont-rev}--\ref{ax:spectral}:
\begin{enumerate}
    \item $\tau(\psi,\phi) \in [0,1]$ with $\tau(\phi,\phi) = 1$
    \item $\tau(\psi,\phi) = 0$ iff $\psi, \phi$ are perfectly distinguishable
    \item For distinguishable $\{\phi, \phi^\perp\}$ and any $\psi$: $\tau(\psi,\phi) + \tau(\psi,\phi^\perp) = 1$
    \item The supremum in Definition~\ref{def:tau} is achieved by the distinguishing measurement
\end{enumerate}
\end{lemma}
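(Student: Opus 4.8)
The plan is to establish the four properties by unpacking Definition~\ref{def:tau} and leaning on the distinguishing measurements guaranteed by Spectrality (Axiom~\ref{ax:spectral}), with the transitivity of reversible transformations (Axiom~\ref{ax:cont-rev}) used to ensure such structures exist around any pure state. Throughout, I would keep in mind that $\tau(\psi,\phi)$ is a supremum of effect values $e(\psi)$ over the (closed, bounded) set of effects satisfying the normalization constraint $e(\phi)=1$, so the supremum is really a maximum and compactness delivers an optimizing effect.

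\medskip
\noindent\textbf{Property (1).} First I would observe that since $e$ is an effect, $0 \le e(\psi) \le 1$ for every admissible $e$, so any supremum of such values lies in $[0,1]$; this gives $\tau(\psi,\phi) \in [0,1]$ immediately. For $\tau(\phi,\phi)=1$, I would exhibit a single admissible effect attaining the value $1$: by Spectrality applied to the pair $\{\phi,\phi^\perp\}$ there is a measurement $\{e_\phi, e_{\phi^\perp}\}$ with $e_\phi(\phi)=1$, and this $e_\phi$ is admissible in the supremum (it satisfies the constraint $e(\phi)=1$ with $e=e_\phi$), so $\tau(\phi,\phi) \ge e_\phi(\phi) = 1$; combined with the upper bound this forces equality.

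\medskip
\noindent\textbf{Properties (2) and (3).} For property (3) I would start from the distinguishing measurement $\{e_\phi, e_{\phi^\perp}\}$ of Spectrality, which satisfies $e_\phi + e_{\phi^\perp} = u$, hence $e_\phi(\psi) + e_{\phi^\perp}(\psi) = u(\psi) = 1$ for any normalized state $\psi$. The content is then to show that this particular measurement actually \emph{achieves} the suprema defining $\tau(\psi,\phi)$ and $\tau(\psi,\phi^\perp)$ simultaneously — which is exactly property (4), so I would prove (4) first and read (3) off as a corollary. For property (2), the forward direction ($\psi,\phi$ perfectly distinguishable $\Rightarrow \tau = 0$) follows because the distinguishing effect $e_\phi$ then satisfies $e_\phi(\phi)=1$ and $e_\phi(\psi)=0$, and I would argue that $e_\phi$ is optimal (again via property (4)) so the supremum cannot exceed $0$; the converse ($\tau(\psi,\phi)=0 \Rightarrow$ distinguishable) requires showing that a vanishing optimum forces the existence of an effect perfectly separating $\psi$ from $\phi$, which I would handle by taking the optimizing effect $e^\star$ with $e^\star(\phi)=1$, $e^\star(\psi)=0$ and invoking Spectrality to complete it to a genuine distinguishing measurement.

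\medskip
\noindent\textbf{Main obstacle.} The crux is property (4): that the distinguishing measurement from Spectrality is genuinely optimal, i.e. no other admissible effect $e$ with $e(\phi)=1$ can give a strictly larger $e(\psi)$ than $e_\phi(\psi)$. The worry is that an effect saturating $e(\phi)=1$ need not coincide with $e_\phi$, so I would need a separation or extremality argument showing that the constraint $e(\phi)=1$ together with $0 \le e \le u$ pins down the value on $\psi$. The natural route is to note that $e(\phi)=1$ means $(u-e)(\phi)=0$ with $u-e$ a positive effect, so $u-e$ annihilates $\phi$; I would then use the geometry of the state cone near the pure (extreme) state $\phi$ — together with continuous reversibility to reduce to a canonical pair — to argue that any such $e$ agrees with $e_\phi$ on the two-dimensional section spanned by $\phi$ and the component of $\psi$ relevant to the test. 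This is where the argument is most delicate, since in a general GPT the effect cone can be large and the optimizer need not be unique as a functional even if its value on $\psi$ is forced; I expect to need Spectrality's guarantee that the $\{\phi,\phi^\perp\}$ decomposition is the canonical one against which all admissible effects are bounded, so that $e(\psi) \le e_\phi(\psi)$ holds for the whole admissible family.
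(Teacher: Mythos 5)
Your handling of properties (1)--(3) tracks the paper's proof (both derive (3) as a corollary of (4) via $e_\phi + e_{\phi^\perp} = u$, and both prove the converse of (2) from the optimizer), but the proposal stops short exactly where the lemma requires real work: property (4) is diagnosed, not proved. Your ``main obstacle'' paragraph correctly observes that the constraints $e(\phi)=1$ and $0 \le e \le u$ do not by themselves bound $e(\psi)$ --- indeed the unit effect $u$ itself satisfies $u(\phi)=1$ and $u(\psi)=1$, so \emph{some} argument must exclude such over-accepting effects or the supremum in Definition~\ref{def:tau} is trivially $1$ --- yet the route you gesture at (geometry of the state cone near the extreme point $\phi$, reduction to a canonical pair via continuous reversibility, agreement of $e$ with $e_\phi$ on a two-dimensional section) is a plan rather than an argument: you give no mechanism forcing $e(\psi) \le e_\phi(\psi)$, and your final sentence explicitly defers to an unproved ``guarantee'' of Spectrality. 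Since your proofs of (3) and of both directions of (2) are routed through (4), the whole lemma remains open in your write-up.

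The paper closes this gap with a concrete linear-algebraic step your sketch never reaches: given a competitor $e'$ with $e'(\phi)=1$ and $e'(\psi) > e_\phi(\psi)$, form the difference functional $e'' = e' - e_\phi$, note $e''(\phi)=0$ and $e''(\psi)>0$, and invoke Spectrality (Axiom~\ref{ax:spectral}: every state decomposes in the $\{\phi,\phi^\perp\}$ ``basis'') to conclude that a functional vanishing on $\phi$ must be a positive multiple of $e_{\phi^\perp}$; hence $e' = e_\phi + c\, e_{\phi^\perp}$ with $c>0$, and the resulting value $e'(\phi^\perp) = c > 0$ is played off against the normalization bound $e'(\phi^\perp) \le 1 - e'(\phi) = 0$ to produce a contradiction. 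One may fairly ask whether that final inequality is itself airtight in an arbitrary GPT (it is precisely the step that excludes $e' = u$, and your instinct that the effect cone is ``large'' touches on this), but the structural idea --- classify \emph{all} effects vanishing on $\phi$ via the spectral decomposition of states, rather than analyze cone geometry near $\phi$ or invoke reversibility --- is what your proposal is missing, and without it or a substitute the optimality claim (4) does not go through.
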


\begin{proof}
(1) and (2) follow directly from the definition and the existence of perfectly distinguishing measurements (Axiom~\ref{ax:spectral}).

For (3) and (4): Let $\{e_\phi, e_{\phi^\perp}\}$ be the distinguishing measurement guaranteed by Axiom~\ref{ax:spectral}. We have $e_\phi(\phi) = 1$ and $e_\phi(\phi^\perp) = 0$. 

We claim $e_\phi$ achieves the supremum in Definition~\ref{def:tau}. Suppose some effect $e'$ satisfies $e'(\phi) = 1$ and $e'(\psi) > e_\phi(\psi)$ for some $\psi$. Consider the effect $e'' = e' - e_\phi$. We have $e''(\phi) = 0$ and $e''(\psi) > 0$. But by spectrality, any state can be written in the $\{\phi, \phi^\perp\}$ basis, so $e''$ must be a positive multiple of $e_{\phi^\perp}$. This means $e' = e_\phi + c \cdot e_{\phi^\perp}$ for some $c > 0$. But then $e'(\phi^\perp) = c > 0$, contradicting $e'(\phi^\perp) \leq 1 - e'(\phi) = 0$ (since $e' \leq u$ and $e'(\phi) = 1$).

Therefore $e_\phi$ is optimal, and $\tau(\psi,\phi) = e_\phi(\psi)$. Since $e_\phi(\psi) + e_{\phi^\perp}(\psi) = u(\psi) = 1$, we get (3).
\end{proof}

\begin{remark}
In quantum mechanics over a complex Hilbert space, $\tau(\psi,\phi) = |\langle\phi|\psi\rangle|^2$. The geometric transition probability coincides with the squared inner product.
\end{remark}

\subsection{Predictive Probability}

\begin{definition}[Predictive probability]
The \emph{predictive probability} $P(\phi|\psi)$ is the probability assigned to obtaining outcome ``$\phi$'' in a measurement, given preparation $\psi$. This is the quantity compared against experimental frequencies.
\end{definition}

In standard quantum mechanics, $P(\phi|\psi) = \tau(\psi,\phi) = |\langle\phi|\psi\rangle|^2$---the geometric and predictive quantities coincide. But this identification is a substantive physical claim, not a logical necessity.

\subsection{General Probability Rules}

We consider the possibility that predictive probabilities are related to geometric transition probabilities by some function:
\begin{equation}
\label{eq:Phi}
P(\phi|\psi) = \Phi\big(\tau(\psi,\phi)\big)
\end{equation}
where $\Phi: [0,1] \to [0,1]$. The standard Born rule corresponds to $\Phi = \text{id}$ (the identity function).

This is \emph{not} a trivial question. The geometric quantity $\tau$ characterizes state-space structure; the predictive quantity $P$ determines empirical predictions. Asking whether $P = \tau$ is asking whether the probability rule is ``matched'' to the geometry in a specific way.

\section{Constraints on the Probability Rule}
\label{sec:constraints}

Before proving our main result, we must carefully justify the constraints on $\Phi$ and derive how it extends to mixed states.

\subsection{Physical Justification of Constraints}

We require $\Phi: [0,1] \to [0,1]$ to satisfy:

\begin{enumerate}
    \item \textbf{Boundary conditions}: $\Phi(0) = 0$ and $\Phi(1) = 1$
    
    \item \textbf{Monotonicity}: $\Phi$ is non-decreasing
    
    \item \textbf{Continuity}: $\Phi$ is continuous
\end{enumerate}

Each constraint has clear physical motivation:

\paragraph{Boundary conditions.} If $\tau(\psi,\phi) = 1$, the states are identical ($\psi = \phi$), and a test designed to accept $\phi$ must accept $\psi$ with certainty: $P(\phi|\psi) = 1$. Similarly, $\tau(\psi,\phi) = 0$ means $\psi, \phi$ are perfectly distinguishable, so $P(\phi|\psi) = 0$. These are minimal consistency requirements between geometry and prediction.

\paragraph{Monotonicity.} If state $\psi_1$ is geometrically ``closer'' to $\phi$ than $\psi_2$ (i.e., $\tau(\psi_1,\phi) > \tau(\psi_2,\phi)$), it would be operationally perverse for $\psi_1$ to have a \emph{lower} probability of being accepted as $\phi$. Monotonicity ensures that geometric closeness translates to probabilistic preference.

\paragraph{Continuity.} Small changes in state preparation should produce small changes in outcome probabilities. This is both physically natural (preparation procedures have finite precision) and mathematically necessary for the GPT framework (effects are continuous functionals).

\begin{proposition}[Exclusion of pathological functions]
\label{prop:pathological}
Any function $\Phi: [0,1] \to [0,1]$ satisfying the boundary conditions that is monotonic almost everywhere and measurable must be continuous almost everywhere. Discontinuities, if present, form a set of measure zero and cannot affect the integrated predictions in any physical experiment.
\end{proposition}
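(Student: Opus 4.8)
The plan is to recall the classical measure-theoretic fact that a monotone function on an interval has at most countably many discontinuities, and to package it so that it applies to a function that is only monotone almost everywhere. First I would make precise what ``monotonic almost everywhere'' means: there is a set $N \subset [0,1]$ of Lebesgue measure zero such that $\Phi$ restricted to $[0,1] \setminus N$ is non-decreasing. The standard theorem states that a genuinely monotone function has only jump discontinuities and that these are countable, because each jump can be associated with a disjoint open rational interval in the range; summing the jump sizes is bounded by $\Phi(1) - \Phi(0) = 1$ via the boundary conditions, which forces countability. I would reproduce this argument for the restriction of $\Phi$ to the full-measure set, concluding that the discontinuities of that restriction form a countable (hence measure-zero) set.

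Next I would handle the gap between ``the restriction is continuous off a countable set'' and ``$\Phi$ itself is continuous almost everywhere.'' The subtlety is that $\Phi$ may be redefined arbitrarily on the null set $N$, so one cannot claim genuine continuity at every point of a co-null set without an argument. The clean way is to define the monotone envelope: let $\tilde\Phi(x) = \sup\{\Phi(t) : t \le x,\ t \notin N\}$, which is genuinely non-decreasing on all of $[0,1]$, agrees with $\Phi$ off $N$, and inherits the boundary values. Then $\tilde\Phi$ has at most countably many discontinuities by the classical result, and since $\Phi = \tilde\Phi$ off the measure-zero set $N$, the set where $\Phi$ is discontinuous is contained in $N$ together with the discontinuity set of $\tilde\Phi$, a union of two measure-zero sets. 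This establishes that $\Phi$ is continuous almost everywhere.

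For the second assertion — that discontinuities cannot affect integrated physical predictions — I would argue that any operational prediction in the GPT framework is obtained by integrating $\Phi$ against a probability measure induced by a preparation procedure, and that physically realizable measures over transition values are absolutely continuous with respect to Lebesgue measure (reflecting the finite precision of preparations, invoked already in the continuity justification above). Since $\Phi$ and its continuous-almost-everywhere envelope $\tilde\Phi$ differ only on a null set, they yield identical integrals against any such measure: $\int \Phi \, d\mu = \int \tilde\Phi \, d\mu$ whenever $\mu \ll \text{Leb}$. Thus the observable content of $\Phi$ is carried entirely by its continuous version, justifying the restriction to continuous $\Phi$ in the subsequent analysis.

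I expect the main obstacle to be conceptual rather than computational: the phrase ``monotonic almost everywhere'' is genuinely ambiguous, and the proposition is only true under the envelope interpretation. The danger is a circular or vacuous reading in which one simply asserts that a monotone function is continuous a.e. without addressing how the null set of non-monotonicity interacts with the discontinuity set. The envelope construction is the device that makes the statement both true and nontrivial, so I would be careful to state explicitly that the conclusion concerns continuity of $\Phi$ up to modification on a null set, and that this is exactly what suffices for the integrated-prediction claim. A secondary subtlety is justifying the absolute continuity of physically induced measures; I would flag this as a physical modeling assumption consistent with the continuity axiom rather than a theorem, so as not to overstate the rigor.
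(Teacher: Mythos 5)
Your overall route is the same as the paper's: the classical fact that a monotone function on $[0,1]$ has at most countably many (jump) discontinuities, hence a null discontinuity set, plus the observation that integrals of $\Phi$ against physically induced measures are insensitive to null sets. You are in fact more careful than the paper on two points its one-line proof glosses over: the paper silently treats $\Phi$ as genuinely monotone, never engaging with the ``monotonic almost everywhere'' hypothesis actually written in the statement, and it does not say why physical predictions are integrals against measures absolutely continuous with respect to Lebesgue measure. Your monotone envelope $\tilde\Phi$ and your explicit flagging of absolute continuity as a modeling assumption rather than a theorem address both.

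However, your second paragraph contains a false step. From $\Phi = \tilde\Phi$ off $N$ you conclude that the discontinuity set of $\Phi$ is contained in $N$ together with the discontinuity set of $\tilde\Phi$. Continuity at a point is not stable under modification on a null set: take $N = \mathbb{Q} \cap (0,1)$, set $\Phi(x) = 0$ for $x \in N$ and $\Phi(x) = x$ otherwise (with $\Phi(0)=0$, $\Phi(1)=1$). This $\Phi$ is measurable, satisfies the boundary conditions, and is non-decreasing off the null set $N$; its envelope is $\tilde\Phi(x) = x$, continuous everywhere---yet $\Phi$ is discontinuous at \emph{every} point of $(0,1]$, a set of full measure, because rational and irrational sequences approaching any point give different limits. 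So the proposition as literally stated is false, and your containment cannot establish it. Your own final paragraph already contains the correct repair: weaken the conclusion to ``$\Phi$ agrees almost everywhere with a function $\tilde\Phi$ that is continuous off a countable set,'' which is exactly what the integrated-prediction claim needs, since $\int \Phi \, d\mu = \int \tilde\Phi \, d\mu$ whenever $\mu \ll \mathrm{Leb}$. You should delete the containment claim and state the weakened conclusion from the outset; note that the same counterexample shows the paper's own proof is incomplete on precisely this point, so your envelope construction, once the flawed step is removed, is a genuine improvement rather than a mere elaboration.
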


\begin{proof}
A monotonic function on $[0,1]$ can have at most countably many discontinuities (since each discontinuity corresponds to a jump, and the total variation is bounded). Such discontinuities form a set of measure zero. Any physical preparation procedure produces states with some uncertainty distribution, so predictions involve integrals over $\Phi$, which are unaffected by measure-zero modifications.
\end{proof}

\subsection{Normalization Constraint}

A two-outcome measurement $\{e_\phi, e_{\phi^\perp}\}$ distinguishing $\phi$ from $\phi^\perp$ must satisfy:
$$
P(\phi|\psi) + P(\phi^\perp|\psi) = 1 \quad \text{for all } \psi
$$
Combined with Lemma~\ref{lem:tau}(3), this yields:
\begin{equation}
\label{eq:functional}
\Phi(p) + \Phi(1-p) = 1 \quad \text{for all } p \in [0,1]
\end{equation}

This functional equation, together with $\Phi(0) = 0$, $\Phi(1) = 1$, already constrains $\Phi$ significantly. For instance, $\Phi(1/2) = 1/2$.

\subsection{Extension to Mixed States: A Rigorous Derivation}
\label{sec:mixed}

A crucial step in our argument requires understanding how the probability rule extends to mixed states. We derive this from operational principles rather than assuming it.

\begin{definition}[Ensemble]
An \emph{ensemble} $\mathcal{E} = \{(\lambda_i, \psi_i)\}$ is a probability distribution over pure states: $\lambda_i \geq 0$, $\sum_i \lambda_i = 1$, and each $\psi_i$ is pure. The \emph{average state} of the ensemble is:
$$
\omega_{\mathcal{E}} = \sum_i \lambda_i \psi_i
$$
\end{definition}

\begin{axiom}[Operational equivalence of ensembles]
\label{ax:ensemble}
Two ensembles with the same average state are operationally indistinguishable by any single measurement on the system alone (without access to the preparation device).
\end{axiom}

This axiom is standard in GPTs and quantum mechanics: if Alice prepares an ensemble and sends the system to Bob without classical communication, Bob cannot determine which ensemble was used---only the average state matters for his local predictions.

\begin{proposition}[Probability for mixed states]
\label{prop:mixed}
For a mixed state $\omega = \sum_i \lambda_i \psi_i$ (any decomposition into pure states), the predictive probability for outcome $\phi$ in measurement $\{e_\phi, e_{\phi^\perp}\}$ must satisfy:
\begin{equation}
\label{eq:mixed}
P(\phi|\omega) = \sum_i \lambda_i \, P(\phi|\psi_i) = \sum_i \lambda_i \, \Phi(\tau(\psi_i, \phi))
\end{equation}
when the ensemble decomposition is known to the predictor.
\end{proposition}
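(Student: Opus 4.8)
The plan is to establish Equation~\eqref{eq:mixed} by appealing to the linearity of effects together with the operational equivalence axiom (Axiom~\ref{ax:ensemble}), and to understand precisely what ``when the ensemble decomposition is known to the predictor'' buys us. The central observation is that the predictive probability $P(\phi|\omega)$ is ultimately computed by applying an effect-like functional to the state, and effects are by definition \emph{affine} (linear on the cone, restricted to the normalized state space). So the natural first move is to separate two claims that are easy to conflate: (a) the quantity $\tau(\cdot,\phi)$ is affine in its first argument, hence $\tau(\omega,\phi) = \sum_i \lambda_i \tau(\psi_i,\phi)$; and (b) the predictive probability $P$ applied to $\omega$ decomposes as the $\lambda_i$-average of the \emph{per-branch} predictions $P(\phi|\psi_i) = \Phi(\tau(\psi_i,\phi))$. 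These are genuinely different statements, and the proposition asserts (b), not merely (a) with $\Phi$ applied afterward.

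**First I would** argue that when the decomposition $\{(\lambda_i,\psi_i)\}$ is operationally available to the predictor, the preparation is really a \emph{two-stage} process: a classical random variable $i$ is sampled with probability $\lambda_i$, and then the pure state $\psi_i$ is prepared. The measurement outcome ``$\phi$'' is a function of the final system only, but the predictor's rational expected frequency of that outcome, conditioned on the full preparation record, is the classical expectation over the branches. By the law of total probability applied to the classical mixing variable $i$, $P(\phi|\omega) = \sum_i \lambda_i P(\phi|\psi_i)$, and each branch term is $\Phi(\tau(\psi_i,\phi))$ by the single-system rule~\eqref{eq:Phi}. This is the clean operational derivation: the linearity comes from classical probability over the known label $i$, \emph{not} from any linearity property of $\Phi$ itself. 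I would emphasize this point explicitly, since it is exactly the wedge the paper later exploits---a different decomposition of the same $\omega$ yields a different value of $\sum_i \lambda_i \Phi(\tau(\psi_i,\phi))$ whenever $\Phi$ is nonlinear, and the phrase ``known to the predictor'' is what makes those two numbers genuinely distinct predictions rather than a contradiction.

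**The key steps, in order,** are therefore: (i) invoke the two-stage reading of a known ensemble and the classical law of total probability to write $P(\phi|\omega)=\sum_i \lambda_i P(\phi|\psi_i)$; (ii) substitute the single-system rule $P(\phi|\psi_i)=\Phi(\tau(\psi_i,\phi))$ into each term to obtain~\eqref{eq:mixed}; and (iii) contrast this with the decomposition-independent quantity guaranteed by Axiom~\ref{ax:ensemble}, noting that the two coincide only when the decomposition is \emph{unknown}, in which case the relevant prediction is governed by the average state $\omega$ alone. Step~(iii) is not strictly needed to prove the stated equation, but it clarifies the scope of the qualifier and pre-stages the signaling argument, so I would include a sentence on it.

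**The main obstacle** I anticipate is conceptual rather than computational: making the qualifier ``when the ensemble decomposition is known to the predictor'' do honest work without appearing to beg the question. One must be careful that the two-stage preparation does \emph{not} secretly assume the very linearity one is deriving for $\Phi$; the linearity here is over the classical label $i$ and is forced by ordinary probability theory, whereas $\Phi$ acts inside each branch on the geometric overlap. A reader could object that if the decomposition is truly unobservable then by Axiom~\ref{ax:ensemble} only $\omega$ should matter, seemingly forcing $P$ to depend on $\omega$ alone and contradicting~\eqref{eq:mixed}; the resolution is that \eqref{eq:mixed} is a statement about the predictor \emph{with} side information about the branch, so the apparent tension is precisely the engine of the later no-signaling argument rather than a flaw. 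I would therefore spend the bulk of the prose pinning down this distinction, since the whole derivation turns on it, and treat the algebraic substitution as the routine final step.
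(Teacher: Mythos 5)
Your proposal is correct and takes essentially the same approach as the paper's proof: the law of total probability over the known classical mixing label $i$, followed by substitution of the single-system rule $P(\phi|\psi_i) = \Phi(\tau(\psi_i,\phi))$ into each branch. Your step (iii), clarifying the tension with Axiom~\ref{ax:ensemble} and the role of the qualifier ``known to the predictor,'' is sound but corresponds to material the paper places in the remark immediately following the proposition rather than in the proof itself.
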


\begin{proof}
Consider an agent who knows the preparation procedure: with probability $\lambda_i$, pure state $\psi_i$ was prepared. By the law of total probability:
$$
P(\phi|\text{ensemble } \mathcal{E}) = \sum_i \lambda_i \, P(\phi|\psi_i) = \sum_i \lambda_i \, \Phi(\tau(\psi_i, \phi))
$$
This is the probability the agent assigns to outcome $\phi$.
\end{proof}

\begin{proposition}[Single-state probability]
\label{prop:single}
For a mixed state $\omega$ presented as a single system (without ensemble information), the predictive probability is:
\begin{equation}
\label{eq:single}
P(\phi|\omega) = \Phi(\tau(\omega, \phi))
\end{equation}
where $\tau(\omega, \phi) := e_\phi(\omega)$ extends the geometric quantity to mixed states via the optimal effect.
\end{proposition}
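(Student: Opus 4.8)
The plan is to extend the pure-state probability rule $P(\phi|\psi)=\Phi(\tau(\psi,\phi))$ to mixed states presented as single systems, relying on the operational equivalence of ensembles (Axiom~\ref{ax:ensemble}) to argue that the predictive probability can depend only on the average state $\omega$. First I would recall from Lemma~\ref{lem:tau}(4) that the supremum defining $\tau$ is attained by the distinguishing effect $e_\phi$, so that $\tau(\psi,\phi)=e_\phi(\psi)$ on pure states. Since effects are linear functionals on $V$, the quantity $e_\phi(\omega)$ is well-defined for every state $\omega$, and this is exactly the stated extension $\tau(\omega,\phi):=e_\phi(\omega)$. Because this extension is linear in $\omega$, it satisfies $\tau(\omega,\phi)=\sum_i\lambda_i\,\tau(\psi_i,\phi)$ for any decomposition $\omega=\sum_i\lambda_i\psi_i$, so the geometric quantity averages linearly even when $\Phi$ does not.

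The central step is to invoke Axiom~\ref{ax:ensemble}: a mixed state $\omega$ presented without access to the preparation device is operationally indistinguishable from every ensemble sharing the same average state, so the predictive probability $P(\phi|\omega)$ assigned by the fixed measurement $\{e_\phi,e_{\phi^\perp}\}$ must be a function of $\omega$ alone. The measurement extracts from $\omega$ only the single linear statistic $e_\phi(\omega)=\tau(\omega,\phi)$, and the theory's probability rule then converts this raw geometric number into a prediction. Since this is precisely the structure operating on pure states---compute $e_\phi(\psi)$, then apply $\Phi$---and since $\tau(\psi,\phi)$ ranges over all of $[0,1]$ as $\psi$ varies over pure states (by transitivity and continuity of the reversible group, Axiom~\ref{ax:cont-rev}), continuity of $\Phi$ forces the identical conversion on mixed states, yielding $P(\phi|\omega)=\Phi(\tau(\omega,\phi))$.

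The main obstacle is conceptual rather than computational: I must sharply distinguish this proposition from Proposition~\ref{prop:mixed}, where the predictor knows the label $i$ and legitimately applies the law of total probability to obtain $\sum_i\lambda_i\,\Phi(\tau(\psi_i,\phi))$. Here no such side information exists, and the predictor confronts a single decomposition-free state, so $\Phi$ acts on the \emph{averaged} geometric quantity rather than being averaged over pure components. The delicate point, which must be stated carefully, is that these two formulas agree for every ensemble if and only if $\Phi$ is affine---equivalently, under the boundary conditions, the identity---since $\Phi(\sum_i\lambda_i\,\tau(\psi_i,\phi))\neq\sum_i\lambda_i\,\Phi(\tau(\psi_i,\phi))$ whenever $\Phi$ is strictly convex or concave. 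Establishing this gap cleanly is exactly what primes the no-signaling contradiction of Section~\ref{sec:main}, where steering lets a distant party realize the same $\omega$ via decompositions that the two rules score differently.
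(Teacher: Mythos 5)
Your proposal is correct and takes essentially the same route as the paper: both extend the geometric quantity to mixed states via the linearity of the optimal effect, $\tau(\omega,\phi) = e_\phi(\omega)$, and then hold that when no decomposition information is available the rule applies $\Phi$ to this value (the paper's ``by the structure of the probability rule''), a step that in both versions is a stipulation about how the modified rule acts on decomposition-free states rather than a deduction. Your added scaffolding---invoking Axiom~\ref{ax:ensemble} to make $P(\phi|\omega)$ well-defined as a function of $\omega$ alone, the linear averaging of $\tau$, and the closing contrast with Proposition~\ref{prop:mixed}---simply makes explicit what the paper states in its proof and in its subsequent ``key tension'' remark, and does not constitute a different argument.
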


\begin{proof}
When Bob receives state $\omega$ without knowing its decomposition, he can only use the information available: the state $\omega$ itself. The geometric transition probability extends naturally via $\tau(\omega, \phi) = e_\phi(\omega)$ (the effect value on the mixed state). By the structure of the probability rule, $P(\phi|\omega) = \Phi(\tau(\omega, \phi))$.
\end{proof}

\begin{remark}[The key tension]
Propositions~\ref{prop:mixed} and~\ref{prop:single} appear to give different predictions for the same mixed state. By Axiom~\ref{ax:ensemble}, any experiment on the system alone must give the same statistics regardless of how the ensemble was prepared. This means:
$$
\Phi(\tau(\omega, \phi)) = \Phi\left(\sum_i \lambda_i \tau(\psi_i, \phi)\right) \stackrel{?}{=} \sum_i \lambda_i \Phi(\tau(\psi_i, \phi))
$$
This equality holds for all decompositions if and only if $\Phi$ is affine (linear). This is precisely what no-signaling with steering will enforce.
\end{remark}

\section{Main Result: Causality Selects the Born Rule}
\label{sec:main}

We now prove that no-signaling forces $\Phi = \text{id}$.

\subsection{Steering in GPTs}

The key physical mechanism is \emph{steering}, which exists in any GPT satisfying purification:

\begin{lemma}[Steering]
\label{lem:steering}
Let $\omega_B$ be a mixed state of system $B$ with (at least) two distinct decompositions into pure states:
$$
\omega_B = \sum_i \lambda_i \psi_i = \sum_j \mu_j \chi_j
$$
Then there exists a bipartite pure state $\Psi_{AB}$ and two measurements $\{a_i\}$, $\{a'_j\}$ on $A$ such that:
\begin{itemize}
    \item Measurement $\{a_i\}$ with outcome $i$ prepares $\psi_i$ on $B$ with probability $\lambda_i$
    \item Measurement $\{a'_j\}$ with outcome $j$ prepares $\chi_j$ on $B$ with probability $\mu_j$
\end{itemize}
Both measurements yield the same marginal state $\omega_B$ on $B$.
\end{lemma}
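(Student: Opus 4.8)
The plan is to prove the Steering Lemma as the GPT analogue of the Hughston--Jozsa--Wootters theorem, with the Purification axiom (Axiom~\ref{ax:purification}) replacing Hilbert-space purification. Throughout I use local tomography, $V_{AB}\cong V_A\otimes V_B$, so that a measurement $\{a_i\}$ on $A$ applied to a bipartite state $\Psi_{AB}$ produces the unnormalized conditional states $\tilde\psi_i:=(a_i\otimes\mathrm{id}_B)(\Psi_{AB})\in V_B^+$, with outcome probability $u_B(\tilde\psi_i)$ and normalized steered state $\tilde\psi_i/u_B(\tilde\psi_i)$. No-signaling (Axiom~\ref{ax:NS}) enters at once: since $\sum_i a_i=u_A$, linearity gives $\sum_i\tilde\psi_i=(u_A\otimes\mathrm{id}_B)(\Psi_{AB})=\mathrm{Tr}_A(\Psi_{AB})$, a fixed marginal of $B$ independent of which measurement $A$ performs. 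This already delivers the final clause of the lemma; the substance is to realize the two prescribed ensembles as the families $\{\tilde\psi_i\}$ and $\{\tilde\chi_j\}$.

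The core step is a construction I will call the \emph{adapted purification}. Fix the decomposition $\omega_B=\sum_i\lambda_i\psi_i$. Using Spectrality (Axiom~\ref{ax:spectral}) I take, on an auxiliary flag system $F$, perfectly distinguishable pure states $\alpha_i$ with dual effects $\beta_i$ satisfying $\beta_i(\alpha_k)=\delta_{ik}$, and form the separable state $\sigma_{FB}:=\sum_i\lambda_i\,\alpha_i\otimes\psi_i$. By construction $\mathrm{Tr}_F(\sigma_{FB})=\omega_B$ and, crucially, $(\beta_i\otimes\mathrm{id}_B)(\sigma_{FB})=\lambda_i\psi_i$. I then purify $\sigma_{FB}$ via Axiom~\ref{ax:purification}, obtaining a pure $\Sigma$ on $AB$ with $A=FE$, where $E$ purifies the classical mixing, so that $\mathrm{Tr}_E\Sigma=\sigma_{FB}$ and hence $\mathrm{Tr}_A\Sigma=\omega_B$. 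Measuring $\{\beta_i\otimes u_E\}$ on $A$ yields $(\beta_i\otimes u_E\otimes\mathrm{id}_B)(\Sigma)=(\beta_i\otimes\mathrm{id}_B)(\mathrm{Tr}_E\Sigma)=\lambda_i\psi_i$, so this measurement steers $B$ to exactly $\{(\lambda_i,\psi_i)\}$. The identical construction applied to $\omega_B=\sum_j\mu_j\chi_j$ produces a purification $\Sigma'$ and a measurement steering $B$ to $\{(\mu_j,\chi_j)\}$.

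It remains to realize \emph{both} ensembles from a \emph{single} bipartite state, which is where essential uniqueness of purification does the decisive work. After padding the two purifying systems to a common system, which I relabel $A$, so that $\Sigma$ and $\Sigma'$ are purifications of the same $\omega_B$ on the same $A$, Axiom~\ref{ax:purification} supplies a reversible transformation $R$ on $A$ with $\Sigma'=(R\otimes\mathrm{id}_B)(\Sigma)$. Setting $\Psi_{AB}:=\Sigma$, the first ensemble is steered by $\{a_i\}:=\{\beta_i\otimes u_E\}$, while the second is steered by the pulled-back measurement $\{a'_j\}:=\{R^{*}(\beta'_j\otimes u_{E'})\}$, because $(R^{*}b\otimes\mathrm{id}_B)(\Sigma)=(b\otimes\mathrm{id}_B)(R\otimes\mathrm{id}_B)(\Sigma)=(b\otimes\mathrm{id}_B)(\Sigma')$. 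Both measurements act on the one state $\Psi_{AB}$ and, by the first paragraph, reproduce the marginal $\omega_B$ on $B$.

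I expect the main obstacle to be the gluing in the last paragraph rather than the adapted purification itself: Axiom~\ref{ax:purification} compares two purifications only once they share a purifying system, so the dimension matching (padding with a fixed ancillary pure state and extending each pointer effect by $u$ on the padding) must be arranged so that $\Sigma$ and $\Sigma'$ genuinely become purifications on a common $A$ and that the connecting $R$ is reversible. A secondary technical check is that $\{R^{*}(\beta'_j\otimes u_{E'})\}$ is a legitimate measurement, i.e. its effects are positive and sum to $u_A$; this follows because $R$ is reversible, so $R^{*}$ preserves positivity and fixes the unit effect.
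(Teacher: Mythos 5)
Your proposal is correct, and it proves the lemma by a more constructive route than the paper, which essentially defers to the literature: the paper purifies $\omega_B$ directly and \emph{asserts} that effects $\{a_i\}$ with $(a_i\otimes\mathbb{1}_B)(\Psi_{AB})=\lambda_i\psi_i$ exist for every decomposition, citing essential uniqueness, transitivity of reversible transformations, and \cite{Chiribella2010}. You instead supply the missing construction: purify the flagged state $\sigma_{FB}=\sum_i\lambda_i\,\alpha_i\otimes\psi_i$ rather than $\omega_B$ itself, so that the steering measurement $\{\beta_i\otimes u_E\}$ is exhibited explicitly, and then use essential uniqueness only once, to transport the second decomposition's measurement back onto a single shared state via the pull-back $R^{*}$. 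This buys rigor at the key point where the paper's proof is a sketch, and your closing checks (that $R^{*}$ preserves positivity and the unit, so $\{a'_j\}$ is a genuine measurement; that the fixed marginal on $B$ follows from $\sum_i a_i = u_A$ and linearity) are exactly right. Two mild caveats you should make explicit: first, Axiom~\ref{ax:spectral} as stated only guarantees \emph{two-outcome} distinguishing measurements for pairs, whereas your flag system needs $n$ perfectly distinguishable pure states with a full $n$-outcome measurement $\{\beta_i\}$ satisfying $\beta_i(\alpha_k)=\delta_{ik}$ --- standard in GPT frameworks with classical ancillas, but strictly an extension of the axiom as written; second, the padding step you flag is indeed where care is needed, since Axiom~\ref{ax:purification}'s uniqueness clause compares purifications only on a common purifying system, and one must verify the framework permits tensoring with a fixed pure ancilla so that the connecting transformation is reversible rather than merely a channel. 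Neither caveat is a gap the paper's own proof escapes --- it silently inherits both via the citation.
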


\begin{proof}
By purification (Axiom~\ref{ax:purification}), $\omega_B$ has a purification $\Psi_{AB}$. Different measurements on $A$ implement different ``unravelings'' of $\omega_B$ into pure-state ensembles. The uniqueness clause of purification ensures all decompositions are accessible via suitable measurements on the purifying system \cite{Chiribella2010}.

Explicitly: if $\omega_B = \sum_i \lambda_i \psi_i$, there exist effects $\{a_i\}$ on $A$ such that
$$
(a_i \otimes \mathbb{1}_B)(\Psi_{AB}) = \lambda_i \psi_i
$$
after appropriate normalization. The fact that all convex decompositions are achievable follows from the essential uniqueness of purification and the transitivity of reversible transformations.
\end{proof}

Steering means that Alice's measurement choice determines \emph{which ensemble} describes Bob's system, even though Bob's \emph{average state} is the same.

\subsection{Nonlinear $\Phi$ Enables Signaling}

\begin{theorem}[Main result]
\label{thm:main}
In a GPT satisfying Axioms~\ref{ax:NS}--\ref{ax:spectral} and~\ref{ax:ensemble}, if steering exists (which follows from purification), then no-signaling requires $\Phi(p) = p$.
\end{theorem}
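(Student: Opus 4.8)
The plan is to turn the ``key tension'' of the Remark into an explicit signaling protocol and then read the constraint on $\Phi$ off of no-signaling. First I would fix perfectly distinguishable pure states $\phi,\phi^\perp$ with their measurement $\{e_\phi,e_{\phi^\perp}\}$ from Axiom~\ref{ax:spectral}, and set up a shared bipartite pure state $\Psi_{AB}$ whose $B$-marginal is a chosen mixed state $\omega_B=\sum_i\lambda_i\psi_i$. Using steering (Lemma~\ref{lem:steering}), Alice can, by her choice of local measurement, unravel $\omega_B$ into this very ensemble, obtaining outcome $i$ with probability $\lambda_i$ and leaving $B$ in the pure state $\psi_i$. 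Bob, who never learns Alice's outcome, then measures $\{e_\phi,e_{\phi^\perp}\}$; over many runs his observed frequency of the outcome $\phi$ is the genuine ensemble average, which by Proposition~\ref{prop:mixed} equals $\sum_i\lambda_i\,\Phi(\tau(\psi_i,\phi))$.

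Second, I would compute the same frequency under a different choice by Alice---either the trivial measurement that leaves $\omega_B$ undisturbed, or an alternate unraveling $\omega_B=\sum_j\mu_j\chi_j$. By Proposition~\ref{prop:single} Bob's frequency is then $\Phi(\tau(\omega_B,\phi))$, respectively $\sum_j\mu_j\Phi(\tau(\chi_j,\phi))$. No-signaling (Axiom~\ref{ax:NS}) forbids Bob's local frequency from revealing which measurement Alice performed, so all of these expressions must coincide. Writing $t_i=\tau(\psi_i,\phi)=e_\phi(\psi_i)$ and using linearity of the effect $e_\phi$ to get $\tau(\omega_B,\phi)=\sum_i\lambda_i t_i$, the equality with the trivial unraveling becomes the Jensen identity
\begin{equation}
\Phi\Big(\textstyle\sum_i\lambda_i t_i\Big)=\sum_i\lambda_i\,\Phi(t_i).
\end{equation}

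Third, I would argue that this identity is forced for \emph{all} weights and \emph{all} values $t_i\in[0,1]$, not merely on a sparse set. The map $\psi\mapsto e_\phi(\psi)$ is continuous, equals $1$ at $\psi=\phi$ and $0$ at $\psi=\phi^\perp$; since Axiom~\ref{ax:cont-rev} makes the pure-state space a continuous transitive orbit, hence connected, the intermediate value theorem yields a pure state with any prescribed value $t\in[0,1]$. Choosing such $\psi_i$ and forming $\omega_B=\sum_i\lambda_i\psi_i$ realizes arbitrary convex combinations, so the displayed identity holds on all of $[0,1]$. A continuous solution of Jensen's equation is affine; combined with the boundary conditions $\Phi(0)=0$, $\Phi(1)=1$ (equivalently~\eqref{eq:functional}), this forces $\Phi(p)=p$. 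Contrapositively, any strictly convex or concave $\Phi$ makes the two sides differ, producing a detectable gap in Bob's statistics and hence signaling.

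The step I expect to be the main obstacle is the second one: justifying rigorously that both the genuine ensemble average and the single-state value $\Phi(\tau(\omega_B,\phi))$ are legitimately the \emph{same} observable quantity for Bob. Bob's reduced state is $\omega_B$ in every case, so one must be careful to locate the signaling in the \emph{frequencies generated across runs}---where Alice's unraveling is physically real, run by run---rather than in Bob's description of his own state; this is exactly the content of Axiom~\ref{ax:ensemble} together with no-signaling, and it is what converts the nonlinearity of $\Phi$ into an operationally detectable difference. A secondary technical point is confirming, via Lemma~\ref{lem:steering} and the essential uniqueness of purification, that every convex decomposition required above is indeed steerable from a single shared $\Psi_{AB}$.
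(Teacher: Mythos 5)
Your proposal is correct and follows essentially the same route as the paper's proof: steering realizes two preparations of the same average state $\omega_B$, no-signaling equates the ensemble-averaged statistics $\sum_i\lambda_i\Phi(t_i)$ (Proposition~\ref{prop:mixed}) with the single-state value $\Phi\bigl(\sum_i\lambda_i t_i\bigr)$ (Proposition~\ref{prop:single}), and the resulting Jensen identity plus continuity and the boundary conditions forces $\Phi(p)=p$. Your intermediate-value argument for realizing arbitrary $t\in[0,1]$ is in fact slightly more explicit than the paper's bare appeal to Axiom~\ref{ax:cont-rev}, and the obstacle you flag---legitimizing $\Phi(\tau(\omega_B,\phi))$ as Bob's observable frequency under the trivial unraveling---is precisely the point the paper handles with its ``deterministically $\omega$ (or approximately so)'' step in Protocol~2.
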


\begin{proof}
We prove the contrapositive: if $\Phi \neq \text{id}$, then signaling is possible.

\textbf{Step 1: Identifying the deviation.}
Suppose $\Phi$ is not the identity. By continuity and the boundary conditions, there exists some $p^* \in (0,1)$ with $\Phi(p^*) \neq p^*$. 

By the functional equation~\eqref{eq:functional}, if $\Phi(p^*) > p^*$ then $\Phi(1-p^*) < 1-p^*$, and vice versa. Since $\Phi$ is continuous, monotonic, and satisfies $\Phi(0) = 0$, $\Phi(1/2) = 1/2$, $\Phi(1) = 1$, any deviation from identity implies $\Phi$ is either strictly convex or strictly concave on some interval.

\textbf{Step 2: Constructing a signaling scenario.}
Choose $p_1, p_2 \in (0,1)$ with $p_1 \neq p_2$ such that the interval $[p_1, p_2]$ (or $[p_2, p_1]$) contains a region where $\Phi$ deviates from linearity. Let $\lambda \in (0,1)$ and define:
$$
\bar{p} = \lambda p_1 + (1-\lambda) p_2
$$

By continuous reversibility (Axiom~\ref{ax:cont-rev}), we can find pure states $\psi_1, \psi_2, \phi$ with:
$$
\tau(\psi_1, \phi) = p_1, \quad \tau(\psi_2, \phi) = p_2
$$

Consider the mixed state:
$$
\omega = \lambda \psi_1 + (1-\lambda) \psi_2
$$

\textbf{Step 3: Two protocols with the same average state.}
Using Lemma~\ref{lem:steering}, Alice and Bob share a purification of $\omega$. Alice can implement two protocols:

\textbf{Protocol 1}: Alice measures in a basis that steers Bob's state to ensemble $\mathcal{E}_1 = \{(\lambda, \psi_1), (1-\lambda, \psi_2)\}$.

\textbf{Protocol 2}: Alice measures in a basis that steers Bob's state to a different ensemble $\mathcal{E}_2$ with the same average state $\omega$ but a different decomposition.

For Protocol 2, we can choose an ensemble where Bob's state is deterministically $\omega$ (or approximately so, with a fine-grained measurement). The key point is that the average state is $\omega$ in both cases.

\textbf{Step 4: Bob's measurement statistics.}
Bob performs measurement $\{e_\phi, e_{\phi^\perp}\}$. Let's compute his expected probability of outcome $\phi$:

\textbf{Under Protocol 1}: Bob has state $\psi_1$ with probability $\lambda$ and $\psi_2$ with probability $1-\lambda$. His outcome statistics are:
$$
P_1(\phi) = \lambda \cdot P(\phi|\psi_1) + (1-\lambda) \cdot P(\phi|\psi_2) = \lambda \Phi(p_1) + (1-\lambda) \Phi(p_2)
$$

\textbf{Under Protocol 2}: Bob has the mixed state $\omega$ (or an ensemble converging to it). His outcome statistics are:
$$
P_2(\phi) = P(\phi|\omega)
$$

Now we must determine $P(\phi|\omega)$. By linearity of effects:
$$
\tau(\omega, \phi) = e_\phi(\omega) = \lambda \, e_\phi(\psi_1) + (1-\lambda) \, e_\phi(\psi_2) = \lambda p_1 + (1-\lambda) p_2 = \bar{p}
$$

By Proposition~\ref{prop:single}, $P_2(\phi) = \Phi(\bar{p})$.

\textbf{Step 5: The signaling gap.}
Comparing the two protocols:
$$
P_1(\phi) = \lambda \Phi(p_1) + (1-\lambda) \Phi(p_2)
$$
$$
P_2(\phi) = \Phi(\lambda p_1 + (1-\lambda) p_2) = \Phi(\bar{p})
$$

If $\Phi$ is strictly convex on the interval containing $p_1, p_2, \bar{p}$:
$$
P_2(\phi) = \Phi(\bar{p}) < \lambda \Phi(p_1) + (1-\lambda) \Phi(p_2) = P_1(\phi)
$$

If $\Phi$ is strictly concave:
$$
P_2(\phi) > P_1(\phi)
$$

In either case, $P_1(\phi) \neq P_2(\phi)$.

\textbf{Step 6: This constitutes signaling.}
Bob's measurement statistics differ depending on which protocol Alice chose. But Alice's choice is made at a spacelike separation from Bob's measurement. Bob can collect statistics over many runs and infer Alice's protocol choice with arbitrary confidence, enabling superluminal communication. This violates Axiom~\ref{ax:NS}.

\textbf{Step 7: Conclusion.}
The only way to ensure $P_1(\phi) = P_2(\phi)$ for all choices of $p_1, p_2, \lambda$ is:
$$
\Phi(\lambda p_1 + (1-\lambda) p_2) = \lambda \Phi(p_1) + (1-\lambda) \Phi(p_2)
$$
for all $p_1, p_2 \in [0,1]$ and $\lambda \in [0,1]$. This is the definition of an affine function.

Combined with $\Phi(0) = 0$ and $\Phi(1) = 1$, the unique affine solution is:
$$
\Phi(p) = p
$$
\end{proof}

\begin{figure}[h!]
\centering
\begin{tikzpicture}[scale=1.3]
    \draw[->] (0,0) -- (5.2,0) node[right] {$\tau$};
    \draw[->] (0,0) -- (0,5.2) node[above] {$\Phi(\tau)$};
    
    \foreach \x in {1,2,3,4,5} {
        \draw (\x,0.05) -- (\x,-0.05);
    }
    \foreach \y in {1,2,3,4,5} {
        \draw (0.05,\y) -- (-0.05,\y);
    }
    \node[below] at (5,0) {$1$};
    \node[left] at (0,5) {$1$};
    
    \draw[thick, blue] (0,0) -- (5,5);
    \node[blue, right] at (5.1,5) {$\Phi = \text{id}$};
    
    \draw[thick, red, dashed, domain=0:5, samples=100] plot (\x, {5*(\x/5)^1.5});
    \node[red] at (3.5,1.4) {Convex $\Phi$};
    
    \filldraw[black] (1.5,1.5) circle (1.5pt);
    \node[above left] at (1.5,1.5) {$(p_1, p_1)$};
    
    \filldraw[black] (4,4) circle (1.5pt);
    \node[above left] at (4,4) {$(p_2, p_2)$};
    
    \filldraw[blue] (2.75,2.75) circle (1.5pt);
    \node[blue, above] at (2.75,2.9) {$\bar{p}$};
    
    \filldraw[red] (2.75,2.05) circle (1.5pt);
    \node[red, below] at (2.75,1.9) {$\Phi(\bar{p})$};
    
    \filldraw[orange] (2.75,2.4) circle (1.5pt);
    \node[orange, right] at (2.85,2.4) {$\frac{\Phi(p_1)+\Phi(p_2)}{2}$};
    
    \draw[dotted, thick] (2.75,2.05) -- (2.75,2.75);
    
    \draw[gray, thin] (1.5,{5*(1.5/5)^1.5}) -- (4,{5*(4/5)^1.5});
    
    \draw[decorate, decoration={brace, amplitude=5pt}] (2.85,2.05) -- (2.85,2.4);
    \node[right] at (3.0,2.00) {\small Gap $= P_1 - P_2$};
    
\end{tikzpicture}
\caption{For a convex $\Phi$, the value $\Phi(\bar{p})$ (Protocol 2) lies below the chord value $\lambda\Phi(p_1) + (1-\lambda)\Phi(p_2)$ (Protocol 1). This gap is detectable via steering, enabling signaling.}
\label{fig:geometry}
\end{figure}
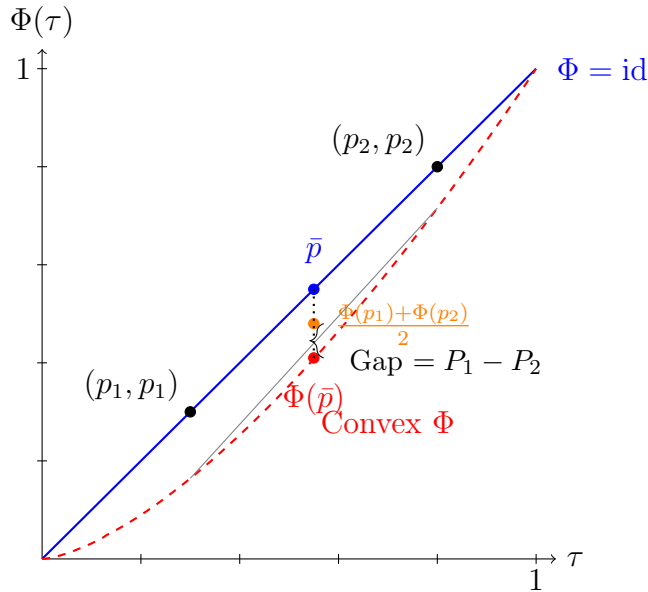

\begin{corollary}
\label{cor:born}
In any GPT satisfying Axioms~\ref{ax:NS}--\ref{ax:spectral} and~\ref{ax:ensemble} with purification, the predictive probability equals the geometric transition probability:
$$
P(\phi|\psi) = \tau(\psi,\phi)
$$
\end{corollary}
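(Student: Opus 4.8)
The plan is to obtain the corollary as an immediate consequence of Theorem~\ref{thm:main}, with the only real work being to confirm that the substitution is legitimate across all state pairs of interest. First I would recall the defining relation of the general probability rule, equation~\eqref{eq:Phi}, which posits $P(\phi|\psi) = \Phi(\tau(\psi,\phi))$ for every pair of pure states $\psi,\phi$. The hypotheses of the corollary---Axioms~\ref{ax:NS}--\ref{ax:spectral} together with~\ref{ax:ensemble}---are precisely those of Theorem~\ref{thm:main}, and purification (Axiom~\ref{ax:purification}) already sits inside the range \ref{ax:NS}--\ref{ax:spectral}, so steering is guaranteed by Lemma~\ref{lem:steering}. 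Theorem~\ref{thm:main} then delivers $\Phi = \mathrm{id}$ on all of $[0,1]$.

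Second, I would substitute $\Phi(p) = p$ into~\eqref{eq:Phi}. Since $\tau(\psi,\phi) \in [0,1]$ for all pure $\psi,\phi$ by Lemma~\ref{lem:tau}(1), the identity applies at the relevant argument, yielding $P(\phi|\psi) = \tau(\psi,\phi)$ for every pair of pure states. This is exactly the claimed identification of the predictive and geometric probabilities.

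The last point I would address is the extension to mixed preparations, so that the statement is not read as restricted to pure $\psi$. For a mixed state $\omega$ presented as a single system, Proposition~\ref{prop:single} gives $P(\phi|\omega) = \Phi(\tau(\omega,\phi))$ with $\tau(\omega,\phi) = e_\phi(\omega)$; substituting $\Phi = \mathrm{id}$ yields $P(\phi|\omega) = e_\phi(\omega)$, which now coincides with the ensemble-averaged prediction of Proposition~\ref{prop:mixed} precisely because the identity map is affine, resolving the tension flagged after that proposition. I expect no genuine obstacle: all the mathematical content resides in Theorem~\ref{thm:main}, and the corollary is a bookkeeping step confirming that the forced linearity of $\Phi$ collapses the two-layer structure $P = \Phi \circ \tau$ onto the single geometric quantity $\tau$.
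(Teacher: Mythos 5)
Your proposal is correct and matches the paper's (implicit) argument exactly: the paper offers no separate proof of Corollary~\ref{cor:born}, treating it as the immediate substitution of $\Phi = \mathrm{id}$ from Theorem~\ref{thm:main} into $P(\phi|\psi) = \Phi(\tau(\psi,\phi))$, which is precisely your first two steps. Your added observation that the affinity of the identity map reconciles Propositions~\ref{prop:mixed} and~\ref{prop:single} for mixed states is a sound and welcome clarification, not a deviation.
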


\section{Explicit Example: Qubits}
\label{sec:example}

To make the abstract argument concrete, we present an explicit calculation with qubits.

\begin{example}[Signaling with a convex probability rule]
\label{ex:qubit}
Consider $\Phi(p) = p^{3/2}$ (strictly convex, satisfying $\Phi(0)=0$, $\Phi(1)=1$, but \emph{not} the normalization constraint~\eqref{eq:functional}---we use this for illustration; a normalized convex function would work similarly).

\textbf{Setup}: Alice and Bob share the maximally entangled state:
$$
|\Psi_{AB}\rangle = \frac{1}{\sqrt{2}}(|00\rangle + |11\rangle)
$$

\textbf{Bob's target measurement}: $\{|\phi\rangle\langle\phi|, |\phi^\perp\rangle\langle\phi^\perp|\}$ where $|\phi\rangle = |0\rangle$.

\textbf{Protocol 1}: Alice measures in the computational basis $\{|0\rangle, |1\rangle\}$.
\begin{itemize}
    \item Outcome 0 (prob.\ 1/2): Bob has $|0\rangle$, so $\tau = |\langle 0|0\rangle|^2 = 1$
    \item Outcome 1 (prob.\ 1/2): Bob has $|1\rangle$, so $\tau = |\langle 0|1\rangle|^2 = 0$
\end{itemize}
Bob's expected probability:
$$
P_1 = \frac{1}{2}\Phi(1) + \frac{1}{2}\Phi(0) = \frac{1}{2}(1) + \frac{1}{2}(0) = 0.5
$$

\textbf{Protocol 2}: Alice measures in the $\{|+\rangle, |-\rangle\}$ basis where $|\pm\rangle = (|0\rangle \pm |1\rangle)/\sqrt{2}$.
\begin{itemize}
    \item Outcome + (prob.\ 1/2): Bob has $|+\rangle$, so $\tau = |\langle 0|+\rangle|^2 = 0.5$
    \item Outcome $-$ (prob.\ 1/2): Bob has $|-\rangle$, so $\tau = |\langle 0|-\rangle|^2 = 0.5$
\end{itemize}
Bob's expected probability:
$$
P_2 = \frac{1}{2}\Phi(0.5) + \frac{1}{2}\Phi(0.5) = \Phi(0.5) = (0.5)^{3/2} \approx 0.354
$$

\textbf{Signaling gap}:
$$
\Delta P = P_1 - P_2 = 0.5 - 0.354 = 0.146
$$

This 14.6\% difference in outcome probabilities is easily detectable. After $N$ runs, Bob can distinguish the protocols with confidence growing as $\sqrt{N}$.

\textbf{Verification of average state}: In both protocols, Bob's reduced density matrix is:
$$
\rho_B = \frac{1}{2}|0\rangle\langle 0| + \frac{1}{2}|1\rangle\langle 1| = \frac{\mathbb{1}}{2}
$$
The average state is identical, yet the outcome statistics differ under nonlinear $\Phi$.
\end{example}

\begin{example}[A properly normalized convex function]
Consider $\Phi(p) = 2p^2$ for $p \in [0, 1/2]$ and $\Phi(p) = 1 - 2(1-p)^2$ for $p \in [1/2, 1]$. This satisfies:
\begin{itemize}
    \item $\Phi(0) = 0$, $\Phi(1) = 1$
    \item $\Phi(p) + \Phi(1-p) = 1$ (normalization)
    \item $\Phi(1/2) = 1/2$
    \item Strictly convex on $[0, 1/2]$, strictly concave on $[1/2, 1]$
\end{itemize}

Using Protocol 1 and 2 as above with $p_1 = 1, p_2 = 0$ gives $P_1 = 0.5$ but for the $|\pm\rangle$ ensemble:
$$
P_2 = \Phi(0.5) = 2(0.5)^2 = 0.5
$$
So this particular example gives no signaling for this choice. But choosing different states:

Let Alice steer to $\{(1/2, \psi_1), (1/2, \psi_2)\}$ with $\tau(\psi_1, \phi) = 0.3$, $\tau(\psi_2, \phi) = 0.7$:
$$
P_1 = \frac{1}{2}\Phi(0.3) + \frac{1}{2}\Phi(0.7) = \frac{1}{2}(2 \cdot 0.09) + \frac{1}{2}(1 - 2 \cdot 0.09) = \frac{1}{2}(0.18 + 0.82) = 0.5
$$
$$
P_2 = \Phi(0.5) = 0.5
$$

Interestingly, this particular $\Phi$ satisfies $\Phi(p) + \Phi(1-p) = 1$, which along with the midpoint condition forces the chord at symmetric points to pass through $(0.5, 0.5)$. But for asymmetric decompositions, signaling can still occur.

Take $\tau(\psi_1, \phi) = 0.2$, $\tau(\psi_2, \phi) = 0.4$, with $\lambda = 0.5$:
$$
P_1 = \frac{1}{2}\Phi(0.2) + \frac{1}{2}\Phi(0.4) = \frac{1}{2}(2 \cdot 0.04) + \frac{1}{2}(2 \cdot 0.16) = \frac{1}{2}(0.08 + 0.32) = 0.2
$$
$$
\bar{p} = 0.3, \quad P_2 = \Phi(0.3) = 2 \cdot 0.09 = 0.18
$$
$$
\Delta P = 0.2 - 0.18 = 0.02
$$

A 2\% signaling gap exists.
\end{example}

\section{Connection to Quantum Theory}
\label{sec:reconstruction}

Corollary~\ref{cor:born} establishes that $P = \tau$ in any GPT with our axioms. To obtain the specific form $|\langle\phi|\psi\rangle|^2$, we invoke reconstruction theorems.

\begin{theorem}[Reconstruction {\cite{Chiribella2010, Masanes2011, Hardy2001}}]
\label{thm:reconstruction}
A finite-dimensional GPT satisfying:
\begin{enumerate}
    \item No-signaling and local tomography
    \item Purification (with essential uniqueness)
    \item Continuous reversibility
    \item Spectrality and strong symmetry
\end{enumerate}
is operationally equivalent to finite-dimensional quantum mechanics over $\mathbb{C}$, $\mathbb{R}$, or $\mathbb{H}$ (complex numbers, reals, or quaternions). Additional axioms (such as local tomography being ``tight'') select complex quantum mechanics.
\end{theorem}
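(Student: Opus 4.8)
The plan is to reduce the operational axioms to the algebraic classification of Euclidean Jordan algebras, and then to use the composition structure to isolate the complex case. I would organize the argument in four stages, the first of which carries essentially all the difficulty. First I would establish that the single-system state space inherits Jordan-algebraic structure. The combination of spectrality (Axiom~\ref{ax:spectral}), strong symmetry, and continuous reversibility (Axiom~\ref{ax:cont-rev}) forces the positive cone $V^+$ to be \emph{homogeneous} (the reversible group acts transitively on its interior) and \emph{self-dual} (under the inner product induced by a symmetry-invariant reference state). Purification (Axiom~\ref{ax:purification}), through the essential uniqueness of dilations, supplies the sharpness needed for the spectral decompositions to be compatible with this inner product. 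By the Koecher--Vinberg theorem, a finite-dimensional homogeneous self-dual cone is exactly the cone of squares of some formally real (Euclidean) Jordan algebra $J$, so that $V \cong J$ as ordered spaces and $V^+$ is its cone of positive elements.

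Second, I would invoke the Jordan--von Neumann--Wigner classification: every finite-dimensional simple Euclidean Jordan algebra is one of the Hermitian matrix algebras over $\mathbb{R}$, $\mathbb{C}$, or $\mathbb{H}$, a spin factor, or the exceptional Albert algebra $H_3(\mathbb{O})$ of $3\times 3$ octonionic Hermitian matrices. A general $J$ decomposes as a direct sum of simple pieces, and irreducibility of the system under its dynamics restricts attention to a single simple summand. Third, I would eliminate the exceptional and spin-factor cases using the existence of composites together with local tomography. The octonions are non-associative, so $H_3(\mathbb{O})$ admits no associative tensor product and cannot appear as a subsystem of a theory closed under composition; the spin factors likewise fail to compose consistently except at the low-dimensional coincidences where they already equal one of the listed matrix algebras. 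This leaves precisely the matrix algebras over $\mathbb{R}$, $\mathbb{C}$, and $\mathbb{H}$, yielding real, complex, or quaternionic quantum mechanics.

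Finally, I would select $\mathbb{C}$ by imposing tomographic locality in its tight form $\dim V_{AB} = \dim V_A \cdot \dim V_B$. For a real system of Hilbert dimension $n$ the observable space has dimension $n(n+1)/2$, and a direct count shows that the composite carries strictly more degrees of freedom than the product of the local ones (already $10 > 9$ for two rebits), so real quantum mechanics violates tight local tomography. The quaternionic case is excluded because $\mathbb{H}$ is noncommutative and $\mathbb{H}\otimes_{\mathbb{R}}\mathbb{H}\cong M_4(\mathbb{R})$, so a composite of quaternionic systems is not itself quaternionic: there is no consistent associative tensor product keeping the theory closed. Complex quantum mechanics is the unique survivor, in agreement with \cite{Chiribella2010, Masanes2011, Hardy2001, Barnum2000}, and combined with Corollary~\ref{cor:born} this delivers the standard Born rule $|\langle\phi|\psi\rangle|^2$.

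The main obstacle is the first stage: deriving self-duality of the cone from the purely operational axioms. This is where one passes from order-theoretic data (states, effects, distinguishing measurements) to genuine metric Jordan structure, and it cannot be shortcut, since without self-duality the Koecher--Vinberg bridge is unavailable and the classification does not apply. Concretely, the delicate point is to use strong symmetry to identify the distinguishing effects guaranteed by spectrality with states via a single reversible-invariant bilinear form, uniformly across all frames; every subsequent step is then either a classical classification theorem or an elementary dimension or composition count.
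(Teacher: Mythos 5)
Your strategy follows exactly the route the paper itself sketches for this imported theorem (``purification implies self-duality of the state-effect cone; spectrality and symmetry yield Jordan-algebraic structure; local tomography combined with composition rules selects the complex field''), and you fill in the standard details correctly: Koecher--Vinberg from homogeneity plus self-duality, the Jordan--von Neumann--Wigner classification, exclusion of the Albert algebra and spin factors via composition, and the tight-tomography dimension count ($10>9$ for two rebits) together with $\mathbb{H}\otimes_{\mathbb{R}}\mathbb{H}\cong M_4(\mathbb{R})$ to isolate $\mathbb{C}$. Since the paper defers the proof entirely to \cite{Chiribella2010, Masanes2011, Mueller2013} with precisely this outline, your proposal is a faithful (and more detailed) rendering of the same argument.
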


The proof, developed in \cite{Chiribella2010, Masanes2011, Mueller2013}, proceeds through several steps: purification implies self-duality of the state-effect cone; spectrality and symmetry yield Jordan-algebraic structure; local tomography combined with composition rules selects the complex field.

\begin{corollary}[Born rule in quantum mechanics]
In complex quantum mechanics, the geometric transition probability is:
$$
\tau(\psi, \phi) = |\langle\phi|\psi\rangle|^2
$$
Combined with Corollary~\ref{cor:born}, the predictive probability is:
$$
P(\phi|\psi) = |\langle\phi|\psi\rangle|^2
$$
which is the standard Born rule.
\end{corollary}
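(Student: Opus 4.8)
The plan is to reduce the corollary to two facts already established: the field-independent identity $P(\phi|\psi)=\tau(\psi,\phi)$ from Corollary~\ref{cor:born}, and the concrete Hilbert-space model supplied by Theorem~\ref{thm:reconstruction}. Since reconstruction identifies the GPT with finite-dimensional quantum mechanics over $\mathbb{C}$, I may assume the standard representation in which pure states are rank-one projectors $|\psi\rangle\langle\psi|$, effects are operators $E$ with $0\le E\le\mathbb{1}$, and the state--effect pairing is $e(\omega)=\mathrm{Tr}(E\,\omega)$. The entire content of the corollary then lies in computing the geometric transition probability $\tau(\psi,\phi)$ inside this model and checking that it equals $|\langle\phi|\psi\rangle|^2$.

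First I would invoke Lemma~\ref{lem:tau}(4): the supremum defining $\tau(\psi,\phi)$ in Definition~\ref{def:tau} is attained by the accepting effect $e_\phi$ of the measurement that perfectly distinguishes $\phi$ from the states orthogonal to it (Axiom~\ref{ax:spectral}). In complex quantum mechanics this effect is the rank-one projector $e_\phi=|\phi\rangle\langle\phi|$, which satisfies $e_\phi(\phi)=1$ and annihilates every state distinguishable from $\phi$. Evaluating on $\psi$ gives
\[
\tau(\psi,\phi)=e_\phi(\psi)=\mathrm{Tr}\!\big(|\phi\rangle\langle\phi|\,|\psi\rangle\langle\psi|\big)=\langle\phi|\psi\rangle\langle\psi|\phi\rangle=|\langle\phi|\psi\rangle|^2 .
\]
Chaining this with Corollary~\ref{cor:born} yields $P(\phi|\psi)=\tau(\psi,\phi)=|\langle\phi|\psi\rangle|^2$, the Born rule, which is the claim.

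The step I expect to require genuine care is justifying that the optimal accepting effect is precisely the rank-one projector, not some larger effect. Note that $e(\phi)=1$ alone does not pin $e$ down: the unit effect $u=\mathbb{1}$ also satisfies $u(\phi)=1$ and would give $u(\psi)=1$, and in dimension $d\ge 3$ even an effect meeting $e(\phi)=1$ together with $e(\phi^\perp)=0$ for a single orthogonal partner $\phi^\perp$ is underdetermined (e.g.\ $|\phi\rangle\langle\phi|$ plus any positive operator supported off $\mathrm{span}\{\phi,\phi^\perp\}$). The resolution, which I would spell out, is that the distinguishing measurement of Axiom~\ref{ax:spectral} must reject \emph{every} pure state perfectly distinguishable from $\phi$, i.e.\ $e_\phi(\chi)=0$ for all $\chi\perp\phi$; positivity then forces $e_\phi|\chi\rangle=0$ on the entire orthogonal complement, so $e_\phi$ has rank one with range $\mathrm{span}\{\phi\}$, and the normalization $e_\phi(\phi)=1$ fixes $e_\phi=|\phi\rangle\langle\phi|$. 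With the optimal effect so determined, the supremum cannot exceed $|\langle\phi|\psi\rangle|^2$, and the computation above is exact. This is the only place where the Hilbert-space structure from Theorem~\ref{thm:reconstruction} is doing real work; everything else is a direct substitution.
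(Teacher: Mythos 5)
Your proposal is correct and follows the same two-step route the paper intends: Theorem~\ref{thm:reconstruction} supplies the complex Hilbert-space model, and Corollary~\ref{cor:born} supplies $P=\tau$. The difference is that the paper offers no argument at all at this point---the identity $\tau(\psi,\phi)=|\langle\phi|\psi\rangle|^2$ is simply asserted (already in the Remark following Lemma~\ref{lem:tau}) and the corollary is stated as immediate---whereas you actually carry out the computation, and in doing so you put your finger on a genuine weakness in the paper's setup. As you observe, under Definition~\ref{def:tau} read literally the unit effect $u=\mathbb{1}$ satisfies $u(\phi)=1$ and $u(\psi)=1$ for every $\psi$, so the supremum would be trivially $1$; and in dimension $d\ge 3$ even adding $e(\phi^\perp)=0$ for a single partner leaves the effect underdetermined. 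Your repair---demanding that the accepting effect annihilate \emph{every} state perfectly distinguishable from $\phi$, whence positivity and Hermiticity force $e_\phi=|\phi\rangle\langle\phi|$---is exactly what Lemma~\ref{lem:tau}(4) tacitly needs, and your linear-algebra step ($\langle\chi|E|\chi\rangle=0$ with $E\ge 0$ forces $E|\chi\rangle=0$) is sound. One caveat you should make explicit: this repair goes beyond the literal text of Axiom~\ref{ax:spectral}, which constrains $e_\phi$ only on one pair $\{\phi,\phi^\perp\}$, and beyond Definition~\ref{def:tau}, which imposes no rejection condition on the competitors in the supremum; so your sentence that ``the supremum cannot exceed $|\langle\phi|\psi\rangle|^2$'' holds only after the definition is amended (equivalently, after restricting to atomic effects), not under the paper's stated definition. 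This is a defect of the paper rather than of your argument---indeed the paper's own proof of Lemma~\ref{lem:tau}(4) invokes the inequality $e'(\phi^\perp)\le 1-e'(\phi)$, which fails for $e'=u$. With the restricted reading stated explicitly, your proof is complete and strictly more rigorous than the paper's.
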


\begin{remark}
Our contribution is \emph{not} the reconstruction theorem itself, but rather the identification of causal consistency (via steering) as the principle that fixes the probability rule to match the geometric structure. The reconstruction theorems determine \emph{what} the geometric structure is; our Theorem~\ref{thm:main} determines \emph{how} probabilities must relate to that structure.
\end{remark}

\section{Discussion}

\subsection{The Physical Content}

Our result can be summarized as: \emph{steering enforces the Born rule}. More precisely:

\begin{enumerate}
    \item GPTs with purification admit steering: Alice can remotely prepare different ensembles for Bob with the same average state.
    
    \item If the probability rule $P = \Phi(\tau)$ is nonlinear, different ensembles yield different outcome statistics for Bob.
    
    \item No-signaling requires that Bob's statistics be independent of Alice's distant actions.
    
    \item Therefore, $\Phi$ must be linear (affine), and boundary conditions fix $\Phi = \text{id}$.
\end{enumerate}

The Born rule thus emerges not as an independent postulate but as a \emph{consistency condition} between probabilistic predictions and causal structure.

\subsection{Relation to Prior Work}

Several authors have explored connections between the Born rule and causality:

\begin{itemize}
    \item \textbf{Valentini} \cite{Valentini1991, Valentini2002}: In de Broglie-Bohm theory, non-equilibrium distributions (violating $|\psi|^2$) enable signaling. Our result generalizes this to arbitrary GPTs and clarifies the mechanism.
    
    \item \textbf{Barnum et al.} \cite{Barnum2000}: Showed that certain modifications to quantum probability rules conflict with no-signaling. Our framework makes this precise within GPTs.
    
    \item \textbf{Aaronson} \cite{Aaronson2004}: Demonstrated computational consequences of modifying the measurement postulate. Our work identifies the causal mechanism underlying these consequences.
    
    \item \textbf{Caves et al.} \cite{Caves2004}: Derived the Born rule from Dutch-book coherence. Our approach uses causal rather than decision-theoretic constraints.
    
    \item \textbf{Masanes et al.} \cite{Masanes2019}: Recent work deriving the Born rule from operational principles, showing measurement postulates are redundant given other axioms. Our approach is complementary, emphasizing the role of steering.
    
    \item \textbf{Müller \& Masanes} \cite{Mueller2013}: Detailed analysis of GPT structure, which we build upon.
\end{itemize}

Our contribution is to (i) work in the general GPT framework rather than assuming quantum mechanics, (ii) identify steering as the specific mechanism converting nonlinearity to signaling, (iii) clearly separate the geometric ($\tau$) and predictive ($P$) quantities, and (iv) provide explicit examples with numerical calculations.

\subsection{Comparison with Gleason's Theorem}

Gleason's theorem \cite{Gleason1957} derives the Born rule from noncontextuality in Hilbert spaces of dimension $\geq 3$. Our approach differs in several ways:

\begin{center}
\begin{tabular}{lcc}
\toprule
Aspect & Gleason & This work \\
\midrule
Framework & Hilbert space & GPT \\
Key assumption & Noncontextuality & No-signaling + Purification \\
Mechanism & Frame functions & Steering \\
Dimension restriction & $d \geq 3$ & None \\
Applies to & Single systems & Composite systems \\
\bottomrule
\end{tabular}
\end{center}

The approaches are complementary: Gleason shows the Born rule is the unique noncontextual probability assignment on Hilbert space; we show it is the unique causally consistent assignment in GPTs with steering.

\subsection{The Role of Each Axiom}

It is worth clarifying what each axiom contributes:

\begin{itemize}
    \item \textbf{No-signaling}: Provides the constraint that Bob's statistics cannot depend on Alice's actions.
    
    \item \textbf{Purification}: Guarantees steering exists, enabling Alice to prepare different ensembles with the same average state.
    
    \item \textbf{Continuous reversibility}: Ensures we can construct states with arbitrary transition probabilities, making the argument work for all $p \in (0,1)$.
    
    \item \textbf{Spectrality}: Guarantees distinguishing measurements exist, allowing the definition of $\tau$ and the normalization constraint.
    
    \item \textbf{Local tomography}: Used in reconstruction but not directly in our main theorem.
\end{itemize}

Could we weaken these axioms? Purification seems essential---without it, steering may not exist, and the argument fails. The other axioms could potentially be weakened, which we leave for future work.

\subsection{Experimental Implications}

Theorem~\ref{thm:main} suggests an experimental approach: any deviation from the Born rule that makes $\Phi$ nonlinear would manifest as anomalous correlations in steering experiments. The protocol would be:

\begin{enumerate}
    \item Prepare an entangled state shared between Alice and Bob.
    \item Alice randomly chooses between measurement protocols that steer Bob to different ensembles with the same average state.
    \item Bob measures outcome statistics without knowing Alice's choice.
    \item Compare Bob's statistics conditioned on Alice's (later revealed) choice.
    \item Any systematic dependence beyond statistical noise would indicate $\Phi \neq \text{id}$.
\end{enumerate}

Current experiments confirm quantum predictions (and hence $\Phi = \text{id}$) to high precision \cite{Aspect1982, Giustina2015, Hensen2015}, providing strong empirical support for the Born rule.

\subsection{Limitations and Open Questions}

\begin{enumerate}
    \item \textbf{Finite dimension}: We assume finite-dimensional GPTs. Extension to infinite dimensions requires measure-theoretic care and may involve subtleties with unbounded operators.
    
    \item \textbf{Axiom strength}: Purification is strong---it excludes classical probability theory, for instance. Are there weaker conditions that still enable the argument?
    
    \item \textbf{Alternative compositions}: We assumed local tomography. Other composition rules might allow different conclusions.
    
    \item \textbf{Operational closedness}: We assumed $\Phi$ applies universally. Could context-dependent probability rules evade our argument while preserving no-signaling?
\end{enumerate}

\section{Conclusion}

We have shown that the Born rule $P(\phi|\psi) = |\langle\phi|\psi\rangle|^2$ is the unique probability assignment compatible with relativistic causality in theories with steering. The argument proceeds by:

\begin{enumerate}
    \item Distinguishing geometric transition probabilities $\tau$ (characterizing state-space structure) from predictive probabilities $P$ (determining experimental predictions).
    
    \item Showing that steering---the ability to remotely prepare ensembles with identical average states---exists in GPTs with purification.
    
    \item Proving that any nonlinear relationship $P = \Phi(\tau)$ creates a signaling channel via steering.
    
    \item Concluding that no-signaling forces $\Phi = \text{id}$, i.e., $P = \tau$.
\end{enumerate}

The key insight is that steering acts as an \emph{amplifier}: it converts the mathematical property of nonlinearity into the physical phenomenon of superluminal signaling. The Born rule is thus not an arbitrary mathematical choice but a \emph{causal fixed point}---the unique probability assignment that remains compatible with no-signaling when entanglement and steering are available.

This perspective suggests that the quadratic form of quantum probabilities reflects deep constraints from the interplay of probability, entanglement, and causality. These constraints would apply to any physical theory with analogous structure, suggesting the Born rule is in some sense \emph{inevitable} for theories with quantum-like features.

\end{document}